\theoremstyle{definition}
\newtheorem{theorem}{Theorem}
\newtheorem{definition}{Definition}
\newtheorem{example}{Example}
\it\color{gray},
\lstdefinelanguage{CodeQL}{
  basicstyle=\scriptsize\ttfamily,
  numbers=left,
  breaklines=true,
  comment=[l]{//},
  morecomment=[s]{/**}{*/},
  emph={    
    forall, exists, then, and, or, not, predicate, from, where,
    select, forex, count, class, extends
  },
  emphstyle={\textbf}
}
\newcommand{\threads}{\ensuremath{\mathcal{T}}}
\newcommand{\operations}{\ensuremath{\mathcal{O}}}
\newcommand{\actions}{\ensuremath{\mathcal{A}}}
\newcommand{\executions}{\ensuremath{\mathcal{E}}}
\newcommand{\fieldaccessactions}{\ensuremath{\actions_a}}
\newcommand{\syncactions}{\ensuremath{\actions_s}}
\newcommand{\otheractions}{\ensuremath{\actions_o}}
\newcommand{\javaclasses}{\ensuremath{\mathcal{C}}}
\newcommand{\classfields}{\ensuremath{\mathcal{F}}}
\newcommand{\classmethods}{\ensuremath{\mathcal{M}}}
\newcommand{\readfield}{\ensuremath{\mathit{read}}}
\newcommand{\writefield}{\ensuremath{\mathit{write}}}
\newcommand{\lock}{\ensuremath{\mathit{lock}()}}
\newcommand{\unlock}{\ensuremath{\mathit{unlock}()}}
\newcommand{\volatile}{\ensuremath{\mathit{volatile}}}
\newcommand{\private}{\ensuremath{\mathit{private}}}
\newcommand{\final}{\ensuremath{\mathit{final}}}
\newcommand{\programorder}{\ensuremath{\to_{\mathit{po}}}}
\newcommand{\syncorder}{\ensuremath{\to_{\mathit{syn}}}}
\newcommand{\initorder}{\ensuremath{\to_{\mathit{ini}}}}
\newcommand{\code}[1]{\texttt{\small #1}}
\newcommand{\proponeref}{\ref{prop:no_escaping}}
\newcommand{\proptworef}{\ref{prop:safe_publication}}
\newcommand{\propthreeref}{\ref{prop:correctly_synchronized}}
\begin{document}

\title{Scalable Thread-Safety Analysis of Java Classes with CodeQL}

\author{\IEEEauthorblockN{Bj{\o}rnar Haugstad J{\aa}tten}
\IEEEauthorblockA{
\textit{IT University of Copenhagen}\\
Copenhagen, Denmark \\
bjja@itu.dk}
\and
\IEEEauthorblockN{Simon Boye J{\o}rgensen}
\IEEEauthorblockA{
\textit{IT University of Copenhagen}\\
Copenhagen, Denmark \\
sboj@itu.dk}
\and
\IEEEauthorblockN{Rasmus Petersen}
\IEEEauthorblockA{
\textit{CodeQL/GitHub}\\
Copenhagen, Denmark \\
yoff@github.com}
\and
\IEEEauthorblockN{Ra{\'u}l Pardo}
\IEEEauthorblockA{
\textit{IT University of Copenhagen}\\
Copenhagen, Denmark \\
raup@itu.dk}
}

\maketitle

\begin{IEEEkeywords}
Thread-safety, Static analysis, Java
\end{IEEEkeywords}

\begin{abstract}
In object-oriented languages software developers rely on \emph{thread-safe classes} to implement concurrent applications.
However, determining whether a class is thread-safe is a challenging task.
This paper presents a highly scalable method to analyze thread-safety in Java classes.
We provide a definition of thread-safety for Java classes founded on the correctness principle of the Java memory model, data race freedom.
We devise a set of properties for Java classes that are proven to ensure thread-safety.
We encode these properties in the static analysis tool CodeQL to automatically analyze Java source code.
We perform an evaluation on the top 1000 GitHub repositories.
The evaluation comprises 3632865 Java classes; with 1992 classes annotated as \texttt{@ThreadSafe} from 71 repositories.
These repositories include highly popular software such as Apache Flink (24.6k stars), Facebook Fresco (17.1k stars), PrestoDB (16.2k starts), and gRPC (11.6k starts).
Our queries detected thousands of thread-safety errors.
The running time of our queries is below 2 minutes for repositories up to 200k lines of code, 20k methods, 6000 fields, and 1200 classes.
We have submitted a selection of detected concurrency errors as PRs, and developers positively reacted to these PRs.
We have submitted our CodeQL queries to the main CodeQL repository, and they are currently in the process of becoming available as part of GitHub actions.
The results demonstrate the applicability and scalability of our method to analyze thread-safety in real-world code bases.

\end{abstract}

\section{Introduction}\label{sec:intro}

Many modern software projects require the use of concurrency, either for scalability or performance or both.
In object-oriented languages, such as Java, software developers rely on \emph{thread-safe classes} to implement concurrent applications.
A thread-safe class must guarantee that instances of the class behave correctly when used by multiple threads, without the need for any external synchronization.
The use of thread-safe classes is ubiquitous in object-oriented concurrent applications~\cite{jcp,herlihy2020art}, as it helps developers ensure the correctness of concurrent applications in a modular fashion.

However, determining whether a class is thread-safe is a challenging task.
Internet fora such as Stack Overflow and issue trackers in software repositories are full of questions regarding the thread-safety of classes and their use.

As a consequence, many methods have been studied to test or analyze whether a class is thread-safe.
One line of work focuses on testing~\cite{nistor2012ballerina,choudhary2017efficient,DBLP:conf/pldi/PradelG12,DBLP:conf/asplos/BurckhardtKMN10,DBLP:conf/asplos/ZhaoW0R25}.
Detecting concurrency errors via testing is difficult, as bugs appear non-deterministically depending on thread scheduling.
Standard unit testing cannot provide guarantees of finding thread-safety bugs, as it relies on the scheduler producing an execution that triggers the bug.
This has motivated work on testing using instrumented schedulers that attempt to explore all executions.
Since generating all possible executions of a concurrent program is infeasible, recent work on testing concurrency focuses on designing schedulers that probabilistically explore executions from the set of all executions.
In this way, it is possible to provide statistical bounds on the probability of finding thread-safety bugs.
Another line of work in the field of formal verification focuses on the use of model-checking and deductive verification to verify concurrency errors.
Model-checking attempts to analyze all possible executions of a program, and thus faces the same challenge as testing with non-deterministic schedulers.
As a consequence, recent works explore the use of abstractions that reduce the number of executions while maintaining the ability to find errors~\cite{DBLP:conf/cav/WuHHLS23, schemmel2020symbolic, aronis2018optimal}.
Even with these abstractions, efficient analysis of real software is not possible via these methods.
Deductive verification methods such as Vercors~\cite{DBLP:conf/esop/LeinoM09,DBLP:conf/fosad/LeinoMS09,DBLP:conf/plpv/AmighiBHZ12}, work directly on source code.
These methods transform the analysis of concurrency errors into an SMT problem~\cite{smt}.
Although this method allows for modular reasoning, it requires annotating source code with specifications that are rare in real world software and require in-depth knowledge of the underlying formalisms to utilize.
While SMT solving technology has been used on instances of real world software, improving its scalability is still a challenge~\cite{DBLP:conf/sigsoft/MikekZ23}.

This paper presents a novel and highly scalable method to analyze thread-safety in Java classes.
The Java memory model defines thread-safety for programs as lack of data races~\cite{jmm}, thread-safe programs are known as \emph{correctly synchronized}.
In a nutshell, correctly synchronized programs ensure that all variable accesses are ordered by the \emph{happens-before} relation~\cite{jmm}.
We lift this definition to the level of Java classes.
We provide a definition of correctly synchronized Java classes founded on the principles of the Java memory model.
Furthermore, we devise a set of rules that are sufficient to prove that a Java class is correctly synchronized.
We design an automatic thread-safety analysis method for Java classes by encoding these rules in the logic programming language CodeQL.
Our method can automatically identify classes which have been marked as thread-safe by developers (using the \code{@ThreadSafe} annotation from the Checker Framework), but are not thread-safe according to the Java memory model.
For each violation, the method reports the conflicting program statements that will cause a data race when executed concurrently.

The method is highly scalable, as part of our evaluation we analyze the top 1000 Java repositories in GitHub. 
We have used the CodeQL Multi-Repository Variant Analysis (MRVA) tool to run our evaluation.
The evaluation comprises 3632865 Java classes; with 1992 classes annotated as \texttt{@ThreadSafe} from 71 repositories.
These repositories include highly popular software such as Apache Flink (24.6k stars), Facebook Fresco (17.1k stars), PrestoDB (16.2k starts), and gRPC (11.6k starts) to mention a few.
Our queries detected thousands of thread-safety errors.
The running time of our queries is below 2 minutes for repositories up to 200k lines of code, 20k methods, 6000 fields, and 1200 classes; which demonstrates its applicability to real-world Java code repositories.
We have submitted a selection of detected concurrency errors as PRs, and developers have positively reacted to these PRs.
The theoretical analysis ensures that the method produces no false positives.
However, our evaluation detected a small fraction of false positives.
These false positives originated from the large diversity of thread-safe classes and (non-standard) locking mechanisms in the concurrent Java ecosystem.
That said, our queries can be extended to include new classes or packages containing thread-safe classes, and locking mechanisms.
We have submitted our CodeQL queries to the main CodeQL repository, and they are currently in the process of becoming available as part of GitHub actions. 

In summary, the contributions of this paper are:

\begin{itemize}
    \item A formal characterization of correctly synchronized classes founded on the principles of the Java memory model (Section \ref{sec:thread_safe_class}).
    \item An implementation in CodeQL of a method to automatically analyze thread-safety of Java classes. (Section \ref{sec:codeql})
    \item A large evaluation of our method on the top 1000 GitHub Java repositories, which demonstrates the scalability and applicability of our method in real world Java software (Section \ref{sec:evaluation}).
\end{itemize}

The repository with the CodeQL queries is publicly available at \url{https://github.com/itu-square/codeql-concurrency}.


\section{Java Memory Model}
\label{subsec:jmm_background}

We present the details of the Java memory model~\cite{jmm} that we use in \cref{sec:thread_safe_class} to define thread-safety for Java classes.

\subsubsection*{Notation}
Let $\javaclasses$ denote the universe of Java classes (or, simply, classes). A class $C \in \javaclasses$ is composed of a set of fields $\classfields_C$ (also known as attributes) and a set of methods $\classmethods_C$.
Let $\threads$ denote a set of threads and $\operations$ a set of operations.
Given $t \in \threads$ and $o \in \operations$, an \emph{action} $(t,o)$ models the execution of an operation $o$ by a thread $t$.
The set of actions is $\actions \subseteq \threads \times \operations$.
In order to define how actions in the Java memory model synchronize they are split into three types:

\begin{itemize}
\item 
  Field access actions, $(t,o) \in \fieldaccessactions$, correspond to actions involving read and write operations on a field. 
  Given a class field $f \in \classfields_C$, we use the notation $\readfield(f)$ and $\writefield(f)$ to denote read and write field access operations.
\item Synchronization actions, $(t,o) \in \syncactions$, which include \lock / \unlock\ operations on a monitor, and read and write accesses on \volatile\ fields.
\item Other actions, $(t,o) \in \otheractions$, i.e., any other operation not included in the above sets. For instance, actions involving fields that are not in $\classfields_C$ (e.g., actions on method local fields) or input/output operations.
\end{itemize}
As expected, $\actions = \fieldaccessactions \cup \syncactions \cup \otheractions$.
An execution $e \in \executions$ is a (possibly infinite) sequence of actions $e = a_1, a_2, \ldots$ such that $a_i$ is executed before $a_{i+1}$.

Given a (concurrent) program, the Java memory model defines the set of \emph{valid} executions of the program; i.e., the set of executions that the JVM is allowed to produce for the program.
These executions are characterized by two relations: \emph{happens-before} and \emph{synchronization order}.
These relations impose constraints on the order in which the actions in executions must take place and their effect on visibility (i.e., what value a thread must see in a read access).
In the following we present the details of these relations.

\subsubsection*{Happens-before}
The happens-before relation is a transitive closure of the union of two relations: \emph{program order} and \emph{synchronizes-with}.
The program order relation captures \emph{intra-thread} semantics (i.e., the execution semantics when the program is executed in isolation by a single thread), and 
the synchronizes-with relation captures the order in which synchronization actions synchronize between threads.

\begin{definition}[Happens-before]
For any two actions $a, b \in \actions$ in an execution of a concurrent program, we say that $a$ happens-before $b$, denoted as $a \to b$ if and only if
\begin{enumerate}[label=HB\arabic*]

\item 
  \textit{Program order:} Action $a$ is executed before be $b$ according to program order (i.e., as executed by a single thread in isolation). \label{hb_rule:program_order}

\item 
  \textit{Monitor lock:} Action $a$ is an \unlock\ and $b$ is any subsequent \lock\ on the same monitor. \label{hb_rule:monitor}

\item 
  \textit{Volatile order:} Given a \volatile\ field $f$, action $a = \writefield(f)$ is a write on a volatile field and $b = \readfield(f)$ is a subsequent read. \label{hb_rule:volatile}

\item
  \textit{Default value init:} Action $a$ writes the default value of a field during object initialization and $b$ is the first action in any other thread. \label{hb_rule:default_init}

\item
  \textit{Final field init:} Action $a$ writes the initial value of \texttt{final} field during object initialization and $b$ is the first action in any other thread. \label{hb_rule:final_init}

\item 
  \textit{Transitivity:} Given $c \in \actions$, if $a \to c$ and $c \to b$, then $a \to b$. \label{hb_rule:transitivity}
\end{enumerate}

We use $\programorder$, $\syncorder$ and $\initorder$ to distinguish the relations related to program order (\ref{hb_rule:program_order}), synchronization (\ref{hb_rule:monitor}, \ref{hb_rule:volatile}) and initialization (\ref{hb_rule:default_init}, \ref{hb_rule:final_init}). As expected, $\programorder \cup \syncorder \cup \initorder \; \subseteq \; \to$.
\end{definition}

\paragraph*{Synchronization order}
The synchronization order is a total order between the synchronization actions in an execution.
A synchronization order satisfies the following properties:
\begin{enumerate}[label=\arabic*.]
\item Synchronization order is consistent with mutual exclusion. This implies that lock and unlock synchronization actions are correctly nested.
\item Synchronization order is consistent with the program order.
\end{enumerate}
Programs may give rise to different synchronization orders; one for each interleaving of synchronization actions.

Only executions whose sequence of actions satisfy program order, synchronization order and happens-before are valid, or, \emph{well-defined}.
Formally,
\begin{definition}[Well-formed executions]
An execution $e \in \executions$ is \emph{well-formed} iff it is consistent with program order, synchronization order and happens-before.
\label{def:well-formed-execution}
\end{definition}

The notion of correctness for concurrent programs that the Java memory model adopts is data-race freedom~\cite{jmm}.
If a program is data-race free, then it is also categorized as \emph{correctly synchronized}.
In what follows, we provide the formal details of these concepts.
We start by defining \emph{conflicting actions}. 
Intuitively, two actions are conflicting if they read and write on the same class field.

\begin{definition}[Conflicting actions]
We say that actions $a, b \in \actions$ in an execution are \emph{conflicting} iff they are field access operations on the same class field $f$, and at least one of them is a write, i.e., $a = \readfield(f)$ and $b=\writefield(f)$.
\label{def:conflicting_actions}
\end{definition}
We remark that this definition refers to reads and writes on non-volatile fields.
Read and write access to volatile fields are never conflicting according to the Java memory model~\cite{jmm}.

If conflicting actions are executed concurrently, it \emph{may} lead to concurrency issues.
The lack of synchronization in conflicting actions leads to \emph{data races}.

\begin{definition}[Data race]
A \emph{data race} exists between two actions $a, b \in \actions$ in an execution \(e \in \executions\) iff they are conflicting actions and they are not ordered by happens-before.
\label{def:data_race}
\end{definition}

\begin{example}
  Here we provide a simple example of a program with a data race.
  Consider the class \texttt{CounterDR} in the top-left of~\cref{fig:method_summary_and_examples}.
  Now consider two threads that concurrently execute the method \texttt{c.inc();} for the same object \texttt{c} of type \texttt{CounterDR}.
  We can use the Java memory model to show that there exist a data race in this program.
  
  Let $(t_i,a)$ denote action $a$ executed by thread $t_i$.
  We use $(n)$ to refer to the program statement in line $n$.
  First, observe that there are two conflicting actions, namely, $(t_i, (5))$ and $(t_j, (7))$ for $i \not= j$.
  To show that there is a data race, we must find (at least) one execution of this program where the pair $(t_i, (5))$ and $(t_j, (7))$ is not ordered by happens-before.
  From program order, we have, for all executions, the following pairs of actions in the happens-before relation: $(t_i,(n)) \programorder (t_i,(n+1))$ for $n \in \{5,6\}$.
  Since there are no synchronization actions in this program, the set of synchronization orders is empty and there are no more pairs of actions in the happens-before relation.
  As a consequence, it is not possible to derive $(t_i, (5)) \to (t_j, (7))$ where $i \not= j$ directly or from the transitive closure of happens-before.
  Furthermore, the above implies that executions are solely characterized by interleaving of actions that satisfies program order.
  For instance, the execution $t_1(5),t_2(5),t_1(6),t_2(6),t_1(7),t_2(7)$ is well-formed and neither $t_1(5) \to t_2(7)$ nor $t_2(5) \to t_1(7)$ holds.
  Thus, we found an execution with conflicting actions that are not part of the happens-before order, which, by~\cref{def:data_race}, implies that there is a data race in the execution.
  \label{ex:data_race}
\end{example}

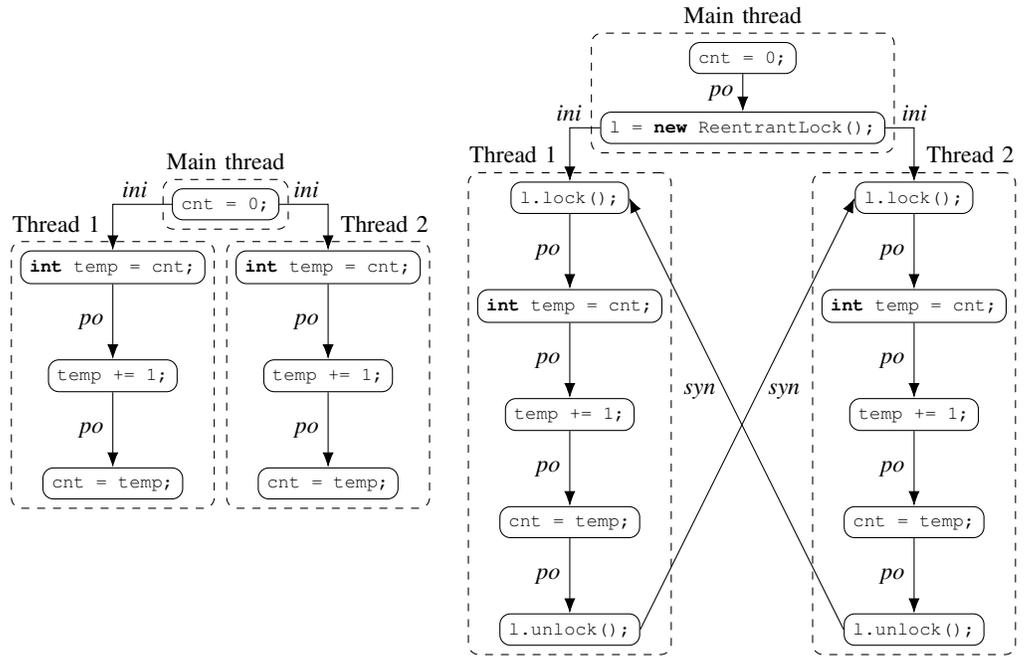
\begin{figure*}[t]
  \centering
  \begin{minipage}{.24\textwidth}
  \begin{lstlisting}[
  language=java,
  basicstyle=\scriptsize\ttfamily
  ]
  class CounterDR {
    int cnt = 0;
  
    public void inc() {
      int temp = cnt;
      temp += 1;
      cnt = temp;
    }
  }
  \end{lstlisting}
  \vspace{3mm}
  \begin{lstlisting}[
  language=java,
  basicstyle=\scriptsize\ttfamily
  ]
  class CounterTS {
    private int cnt = 0;
    private final Lock l = 
      new ReentrantLock();
  
    public void inc() {
      l.lock();
      int temp = cnt;
      temp += 1;
      cnt = temp;
      l.unlock();
    }
  }
  \end{lstlisting}
  \end{minipage}
  \hfill
  \begin{minipage}{.32\textwidth}
  \begin{tikzpicture}[
    every node/.style={draw, rounded corners, align=center},
    arrow/.style={-{Latex[length=2mm]}, thin},
    labelnode/.style={draw=none, font=\small}
  ]
  
  \node (1_stmt1) {\lstinline|int temp = cnt;|};
  \node (1_stmt2) [below=of 1_stmt1] {\lstinline|temp += 1;|};
  \node (1_stmt3) [below=of 1_stmt2] {\lstinline|cnt = temp;|};
  
  \draw [arrow] (1_stmt1) -- node[labelnode, left] {\textit{po}} (1_stmt2);
  \draw [arrow] (1_stmt2) -- node[labelnode, left] {\textit{po}} (1_stmt3);
  
  \node[draw, dashed, fit=(1_stmt1) (1_stmt2) (1_stmt3), label={[xshift=-7.5mm]above:\small{Thread 1}}] {};
  
  \node (2_stmt1) [right=4mm of 1_stmt1] {\lstinline|int temp = cnt;|};
  \node (2_stmt2) [below=of 2_stmt1] {\lstinline|temp += 1;|};
  \node (2_stmt3) [below=of 2_stmt2] {\lstinline|cnt = temp;|};
  
  \draw [arrow] (2_stmt1) -- node[labelnode, left] {\textit{po}} (2_stmt2);
  \draw [arrow] (2_stmt2) -- node[labelnode, left] {\textit{po}} (2_stmt3);
  
  \node[draw, dashed, fit=(2_stmt1) (2_stmt2) (2_stmt3), label={[xshift=7.5mm]above:\small{Thread 2}}] {};
  
  \node (3_stmt1) [above=4mm of 1_stmt1, xshift=15mm] {\lstinline|cnt = 0;|};
  
  \draw [arrow] (3_stmt1) -| node[labelnode,yshift=2mm,right] {\textit{ini}} (1_stmt1);
  \draw [arrow] (3_stmt1) -| node[labelnode,yshift=2mm,left] {\textit{ini}} (2_stmt1);
  
  \node[draw, dashed, fit=(3_stmt1), label={[xshift=0mm]above:\small{Main thread}}] {};

  \end{tikzpicture}
  \end{minipage}
  \hfill
  \begin{minipage}{.41\textwidth}
  \begin{tikzpicture}[
    every node/.style={draw, rounded corners, align=center},
    arrow/.style={-{Latex[length=2mm]}, thin},
    labelnode/.style={draw=none, font=\small}
  ]
  
  \node (1_lock) {\lstinline|l.lock();|};
  \node (1_stmt1) [below=of 1_lock]{\lstinline|int temp = cnt;|};
  \node (1_stmt2) [below=of 1_stmt1] {\lstinline|temp += 1;|};
  \node (1_stmt3) [below=of 1_stmt2] {\lstinline|cnt = temp;|};
  \node (1_unlock) [below=of 1_stmt3] {\lstinline|l.unlock();|};
  
  \draw [arrow] (1_lock) -- node[labelnode, left] {\textit{po}} (1_stmt1);
  \draw [arrow] (1_stmt1) -- node[labelnode, left] {\textit{po}} (1_stmt2);
  \draw [arrow] (1_stmt2) -- node[labelnode, left] {\textit{po}} (1_stmt3);
  \draw [arrow] (1_stmt3) -- node[labelnode, left] {\textit{po}} (1_unlock);
  
  \node[draw, dashed, fit=(1_stmt1) (1_stmt2) (1_stmt3) (1_lock) (1_unlock), label={[xshift=-7.5mm]above:\small{Thread 1}}] {};
  
  \node (2_lock) [right=3cm of 1_lock]{\lstinline|l.lock();|};
  \node (2_stmt1) [below=of 2_lock] {\lstinline|int temp = cnt;|};
  \node (2_stmt2) [below=of 2_stmt1] {\lstinline|temp += 1;|};
  \node (2_stmt3) [below=of 2_stmt2] {\lstinline|cnt = temp;|};
  \node (2_unlock) [below=of 2_stmt3] {\lstinline|l.unlock();|};
  
  \draw [arrow] (2_lock) -- node[labelnode, left] {\textit{po}} (2_stmt1);
  \draw [arrow] (2_stmt1) -- node[labelnode, left] {\textit{po}} (2_stmt2);
  \draw [arrow] (2_stmt2) -- node[labelnode, left] {\textit{po}} (2_stmt3);
  \draw [arrow] (2_stmt3) -- node[labelnode, left] {\textit{po}} (2_unlock);
  
  \node[draw, dashed, fit=(2_stmt1) (2_stmt2) (2_stmt3) (2_lock) (2_unlock), label={[xshift=7.5mm]above: \small{Thread 2}}] {};
  
  \node (3_stmt2) [above=5mm of 1_lock, xshift=23mm] {\lstinline|l = new ReentrantLock();|};
  \node (3_stmt1) [above=5mm of 3_stmt2] {\lstinline|cnt = 0;|};
  
  \draw [arrow] (3_stmt1) -- node[labelnode, left] {\textit{po}} (3_stmt2);
  \draw [arrow] (3_stmt2) -| node[labelnode,yshift=2mm,xshift=-3mm,right] {\textit{ini}} (1_lock);
  \draw [arrow] (3_stmt2) -| node[labelnode,yshift=2mm,xshift=3mm,left] {\textit{ini}} (2_lock);
  
  \node[draw, dashed, fit=(3_stmt1) (3_stmt2), label={above:\small{Main thread}}] {};

  \draw [arrow] (1_unlock.east) -- node[labelnode,yshift=3mm, xshift=-3mm,left] {\textit{syn}} (2_lock.west);
  \draw [arrow] (2_unlock.west) -- node[labelnode,yshift=3mm, xshift=3mm,right] {\textit{syn}} (1_lock.east);
  \end{tikzpicture}
  \end{minipage}

  \caption{
    \underline{Left-top}: Non thread-safe implementation of a counter class \textit{CounterDR}.
    \underline{Left-bottom}: Thread-safe implementation of a counter class \textit{CounterTS}.
    \underline{Middle}: Pairs of actions related by happens-before in a concurrent execution of \textit{inc()} for \textit{CounterDR}. 
    \underline{Right}: Pairs of actions related by happens-before in a concurrent execution of \textit{inc()} for \textit{CounterTS}. 
    For the concurrent executions, a \emph{main thread} creates an instance of a counter object, creates two threads and each of those executes the method \textit{inc()}.
    The dashed boxes indicate the operations executed by each thread.
  }
  \label{fig:method_summary_and_examples}
\end{figure*}


\section{Thread-Safety in Java Classes}
\label{sec:thread_safe_class}

Goetz et al.~\cite{jcp}---one of the most popular reference textbooks for concurrent programming in Java---describes a thread-safe class as follows:
\begin{definition}[Thread-safe Java Class~\cite{jcp}]
\textit{``A class is thread-safe if it behaves correctly when accessed from multiple threads, regardless of the scheduling or interleaving of the execution of those threads by the runtime environment, and with no additional synchronization or other coordination on the part of the calling code.''}
\label{def:thread-safe-class-goetz}
\end{definition}

Although this definition provides an intuitive meaning for thread-safety, it is informal, as it does not precisely define what ``correct behaviour'' or ``access'' means.
The Java memory model defines correctly synchronized programs as those free from data races (cf.~\cref{subsec:jmm_background}).
Therefore, in this paper, we use this correctness notion for Java classes.
We lift the notion of correctly synchronized programs to classes as follows:
\begin{definition}[Correctly Synchronized Java Class]
A Java Class is \emph{correctly synchronized} iff no concurrent execution of field accesses and method calls results in data races.
\label{def:correctly-synchronized-class}
\end{definition}
It follows from~\cref{def:correctly-synchronized-class} that programs composed only of field accesses and methods calls on correctly synchronized classes are free from data races.
Since data race freedom is the notion of correctness that the Java memory model embraces, we equate correct synchronization to thread-safety for Java classes.
For a given class, the intended application logic may impose more constraints than data race freedom for it to be thread-safe, but here we only consider the general considerations enforced by the Java memory model.
In what follows, we use the terms \emph{correctly synchronized class} and \emph{thread-safe class} interchangeably.

The problem of determining whether a class is correctly synchronized can be formulated as follows:
Given a class $C$ with $\{m_1, \ldots, m_n\} = \classmethods_C$ and $\{f_1, \ldots, f_m\} = \classfields_C$, $C$ is correctly synchronized iff none of the possible executions of the program in Listing~\ref{lst:correctly_synchronized_check} result in a data race.

\begin{lstlisting}[
language=java,
basicstyle=\scriptsize\ttfamily,
escapeinside={@}{@},
label=lst:correctly_synchronized_check,
caption={For class \texttt{C} to be thread-safe, none of the well-formed executions of this program must contain data races.}
]
C c = new C(...);

// Field reads
new Thread(() -> {x@$_1$@ = c.f@$_1$@}).start();
...
new Thread(() -> {x@$_m$@ = c.f@$_m$@}).start();

// Field writes
new Thread(() -> {c.f@$_1$@ = v}).start();
...
new Thread(() -> {c.f@$_m$@ = v}).start();

// Method calls
new Thread(() -> {c.m@$_1$@()}).start();
...
new Thread(() -> {c.m@$_n$@()}).start();
\end{lstlisting}

Ensuring that a class is thread-safe is a difficult task for developers.
Goetz et al.\ propose a set of design guidelines that aim to ensure that a class is thread-safe~\cite{jcp}:
\begin{enumerate}[label=\textbf{G\arabic*}]
\item Class state must not escape. Class fields must only be accessed through the methods of the class. This guideline delimits the source code that the developer of a thread-safe class must consider to the methods of the class.
\item All class fields must be safely published. Safe publication refers to ensuring that class fields are correctly initialized before they are used by any other read (i.e., any thread reading the field must read the initial value if there was no write between the initialization and the read).
\item Synchronization must be used to access mutable fields.
\end{enumerate}

In what follows, we turn these guidelines into properties of a Java class.
Furthermore, we prove that that these properties are sufficient to ensure that a Java class is thread-safe according to~\cref{def:correctly-synchronized-class}.
\begin{enumerate}[label=\textbf{P\arabic*}]
\item \textit{No escaping}. All class fields $f \in \classfields$ must be declared as \private.\label{prop:no_escaping}
\item \textit{Safe publication}. All class fields $f \in \classfields$ must be either: i) initialized to the default value; ii) declared as \final; or iii) declared as \volatile.\label{prop:safe_publication}
\item \textit{Correctly synchronized}. For all field access actions $a, b \in \fieldaccessactions$, if they are conflicting on a field $f \in \classfields$ (cf~\cref{def:conflicting_actions}), then we have $l_1, l_2, u_1, u_2 \in \syncactions$ such that $l_i,u_i$ are \lock/\unlock\ operations on the same monitor, and it holds $l_1 \to a \to u_1$ and $l_2 \to b \to u_2$.\label{prop:correctly_synchronized}
\end{enumerate}

The following theorem proves Java classes with the properties above are correctly synchronized.

\begin{theorem}
If properties \ref{prop:no_escaping}, \ref{prop:safe_publication} and \ref{prop:correctly_synchronized} hold for a class $C$, then $C$ is correctly synchronized.
\end{theorem}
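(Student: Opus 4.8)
The plan is to fix an arbitrary well-formed execution $e$ of the program in Listing~\ref{lst:correctly_synchronized_check} together with an arbitrary pair of conflicting actions $a,b \in \fieldaccessactions$ occurring in $e$ on some field $f \in \classfields_C$, and to show that $a$ and $b$ are ordered by happens-before. By \cref{def:data_race} this establishes that $e$ contains no data race, and since $e$ is arbitrary, \cref{def:correctly-synchronized-class} yields that $C$ is correctly synchronized.

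First I would invoke \ref{prop:no_escaping}: since $f$ is \private, every statement reading or writing $f$ lies inside the constructor or a method of $C$; in particular the raw ``field reads''/``field writes'' threads of Listing~\ref{lst:correctly_synchronized_check} cannot occur in any well-formed program that uses $C$, and each of $a,b$ stems either from the construction of \texttt{c} in the main thread or from one of the method-call threads. I then case-split on $f$ using \ref{prop:safe_publication}. If $f$ is \volatile, then by the remark after \cref{def:conflicting_actions} accesses to $f$ are never conflicting, so this case is vacuous. If $f$ is \final, its only write is the initializing write $w$ during construction in the main thread; by \ref{hb_rule:final_init}, $w$ happens-before the first action of every other thread, hence---via \ref{hb_rule:program_order} and \ref{hb_rule:transitivity}---before every access to $f$ in a method-call thread. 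Since at least one of $a,b$ is a write and $w$ is the unique write, $w \in \{a,b\}$ with the other action executed in a different thread, so $a \to b$ or $b \to a$. If $f$ is initialized to its default value, I would further split on whether one of $a,b$ is the default-value initializing write; if so, the argument is identical but uses \ref{hb_rule:default_init} instead of \ref{hb_rule:final_init}.

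The remaining---and principal---case is that $f$ is mutable and both $a$ and $b$ are genuine field accesses inside method bodies. If $a$ and $b$ belong to the same thread they are ordered by \ref{hb_rule:program_order}; otherwise, by \ref{prop:correctly_synchronized} there are \lock/\unlock\ actions $l_1,u_1,l_2,u_2$ on a common monitor with $l_1 \to a \to u_1$ and $l_2 \to b \to u_2$. These are synchronization actions, and the synchronization order is a total order consistent with mutual exclusion, so the two critical sections bracketing $a$ and $b$ on that monitor are disjoint in synchronization order; without loss of generality $u_1$ precedes $l_2$. As $u_1$ is an \unlock\ and $l_2$ a later \lock\ on the same monitor, \ref{hb_rule:monitor} gives $u_1 \to l_2$, and \ref{hb_rule:transitivity} then yields $a \to u_1 \to l_2 \to b$, hence $a \to b$. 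In all cases $a$ and $b$ are happens-before ordered, as required.

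The step I expect to be the main obstacle is this last one: making precise why two critical sections that guard the conflicting accesses and are taken on the same monitor cannot interleave. This does not follow from the explicit rules \ref{hb_rule:program_order}--\ref{hb_rule:transitivity} but from the stipulation that the synchronization order is consistent with mutual exclusion and correctly nested, and one must also check that reentrancy or nested locking inside the brackets supplied by \ref{prop:correctly_synchronized} is harmless (only the outermost bracket per thread is used, and mutual exclusion still separates distinct threads). A secondary point needing care is the bookkeeping around \ref{prop:no_escaping} and the precise form of Listing~\ref{lst:correctly_synchronized_check}---arguing that the raw field-access threads may legitimately be discarded---together with reconciling the universal quantifier in \ref{prop:correctly_synchronized} with the initialization writes that are handled instead by \ref{hb_rule:default_init} and \ref{hb_rule:final_init}.
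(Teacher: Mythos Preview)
Your proposal is correct and takes essentially the same approach as the paper: use \ref{prop:no_escaping} to eliminate direct field accesses, \ref{prop:safe_publication} together with \ref{hb_rule:default_init}/\ref{hb_rule:final_init} to order initialization writes against later accesses, and \ref{prop:correctly_synchronized} plus the synchronization order and \ref{hb_rule:monitor} for the method--versus--method case. The paper organises its case split by access location (direct access, initialization write, method call) rather than by the field's \ref{prop:safe_publication} type as you do, but the content is the same, and your treatment of the mutual-exclusion step is in fact more explicit than the paper's.
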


\begin{proof}

Let \(C \in \javaclasses\) be a correctly synchronized class.
Let \(e \in \executions\) be a well-formed execution composed by field accesses and methods calls on an object of class $C$.
Let \(f \in \classfields_C\) be a field in \(C\).
For any conflicting field accesses \(a, w\) in an execution \(e\) such that \(w = \writefield(f)\) and \(a \in \{\readfield(f), \writefield(f)\}\) we show that \(a, w\) are ordered by happens-before.

Conflicting actions can occur in the following cases:
\begin{inparaenum}[i)]
\item two direct field accesses executed by two different threads; \label{ca:1}
\item a direct field access and a method call involving \(a\) or \(w\) executed by two different threads; \label{ca:2}
\item an initialization write \(w_{ini}\) and a direct field access executed by two different threads; \label{ca:3}
\item an initialization write \(w_{ini}\) and a method call involving \(a\) executed by two different threads; \label{ca:4}
\item two method calls involving \(a\) and \(b\), respectively, and executed by two different threads; \label{ca:5}
\end{inparaenum}
We show that for each of these cases the conflicting accesses are ordered by happens-before or they cannot be part of an execution \(e\) composed only by operations on the correctly synchronized class \(C\).

By property~\ref{prop:no_escaping}, we have that threads cannot directly access fields in class $C$.
Therefore, we can conclude that cases \ref{ca:1}, \ref{ca:2} and \ref{ca:3} cannot occur in $e$.

For case~\ref{ca:4}, consider two threads \(t_{ini}, t\), denoting the thread initializing an object of class \(C\) and a thread executing a field access \(a\) via a method call, respectively.
Since $C$ is correctly synchronized and by property \ref{prop:safe_publication}, we have that \(w_{ini}\) accesses a field that is initialized to the default value, declared as \texttt{final} or declared as \texttt{volatile}.
If the field is declared as \texttt{volatile}, then the actions are not conflicting; recall that the definition of conflicting actions in the Java memory model excludes accesses on \texttt{volatile} variables.
If the field is initialized to the default value or declared as \texttt{final}, then, by happen-before rules \ref{hb_rule:default_init} and \ref{hb_rule:final_init} we have that \((t_{ini}, w_{ini}) \to (t, a_1)\) where \(a_1\) denotes the first operation executed by thread \(t\).
Since the method call is executed by \(t\), by rule \ref{hb_rule:program_order}, we have that \((t, a_1) \to (t, a)\).
Finally, by transitivity, we can derive \((t_{ini}, w_{ini}) \to (t,a)\), which implies that the conflicting actions in case~\ref{ca:4} are ordered by happens-before.

For case~\ref{ca:5}, consider two threads \(t_1, t_2\) executing two method calls each involving \(w\) and \(a\), respectively.
By property~\ref{prop:correctly_synchronized}, we have \(l_1 \to w \to u_1\) and \(l_2 \to a \to u_2\) where \(l_i, u_i\) are \textit{lock()}/\textit{unlock()} operations on the same monitor. 
Depending on the synchronization order, one of $l_1$ and $l_2$ is executed the earlier in $e$. 
Assume, without loss of generality, it is $l_1$. 
Then, by rule~\ref{hb_rule:monitor}, we have that \((t_1, u_1) \to (t_2, l_2)\). 
By transitivity (\ref{hb_rule:transitivity}) we can derive \((t_1, w) \to (t_2, a)\).
The case where $l_2$ occurs before $l_1$ is symmetric.
Thus, we can conclude that either \((t_1, w) \to (t_2, a)\) or \((t_2, a) \to (t_1, w)\); depending on the synchronization order.
This implies that the conflicting actions in case~\ref{ca:5} are ordered by happens-before.
\end{proof}

\begin{example}

Consider the two classes on the left of~\cref{fig:method_summary_and_examples}: \textit{CounterDR} and \textit{CounterTS}.
We illustrate the use of properties \ref{prop:no_escaping}-\ref{prop:correctly_synchronized} to show that the latter is thread-safe and the former is not.
The middle and right graphs in~\cref{fig:method_summary_and_examples} show the pairs of operations ordered by happens-before in a concurrent execution of the \textit{inc()} method---for \textit{CounterDR} and \textit{CounterTS}, respectively.
The concurrent execution is an instance of Listing~\ref{lst:correctly_synchronized_check} for the threads executing method calls.

We start with the class \textit{CounterDR}.
Since variable \textit{cnt} is initialized to 0 (the default value for integers in Java), we have that \ref{prop:safe_publication} holds.
In the middle graph in \cref{fig:method_summary_and_examples} we observe that the first operation of every thread and the initialization writes are ordered by $\initorder$; this is due to rule~\ref{hb_rule:default_init} of the happens-before relation.
There are conflicting actions in the execution, namely, the read access in line 5 and the write access in line 7.
Since there are no locks in the implementation of \textit{CounterDR}, \ref{prop:correctly_synchronized} does not hold.
The graph shows the operations ordered by $\programorder$ in each thread.
As discussed in~\cref{ex:data_race}, due to the absence of locks, there are not operations ordered by $\syncorder$.
As a consequence, we cannot order the conflicting accesses using happens-before.
This implies that the execution contains data races (cf.~\cref{def:data_race}), and consequently \textit{CounterDR} is not thread-safe (cf.~\cref{def:correctly-synchronized-class}).
Lastly, \ref{prop:no_escaping} does not hold.
The field \textit{cnt} is not defined as \textit{private}.
Recall that our definition of correctly synchronized class (cf.~\cref{def:correctly-synchronized-class}) requires absence of data races for concurrent field accesses.
However, in \textit{CounterDR}, \textit{cnt} can be read/written without synchronization by any thread.
Consider again Listing~\ref{lst:correctly_synchronized_check}, the field accesses part of the program lead to data races.

Consider now \textit{CounterTS} (left-bottom class and right graph in~\cref{fig:method_summary_and_examples}).
Class fields are declared as \textit{private}, thus \ref{prop:no_escaping} holds.
This prevents any data races from the field accesses part of Listing~\ref{lst:correctly_synchronized_check}; as those field accesses cannot be executed.
For \textit{CounterTS}, initialization assigns the default value 0 to \textit{cnt} and also initializes the lock object \textit{l}.
These two operations are executed by the main thread, and, consequently, they are ordered by $\programorder$.
Although \textit{l} is not initialized to its default value---the default value for non-primitive objects in Java is \texttt{null}---it is declared as \textit{final}.
Therefore, by applying rule \ref{hb_rule:final_init}, the initialization of \textit{l} and the first operation of every thread are ordered by $\initorder$; which implies absence of data races during initialization.
Having \textit{cnt} initialized to the default value and \textit{l} declared as \textit{final} also implies that \ref{prop:safe_publication} holds.
\textit{CounterTS} has the same set of conflicting operations than \textit{CounterDR}.
However, note that now all conflicting operations are surrounded by \textit{lock()}/\textit{unlock()} operations on the same lock (which is an explicit monitor).
This implies that \ref{prop:correctly_synchronized} holds.
Due to the happens-before rule \ref{hb_rule:monitor}, we have that \textit{lock()}/\textit{unlock()} operations are ordered by $\syncorder$.
(Any given execution will only have one of the two edges marked by $syn$ in the figure.)
Note that, in this graph, it is possible to find a path from/to any pair of conflicting operations.
This implies that using transitivity (\ref{hb_rule:transitivity}) we can establish a happens-before order between any two conflicting actions, and, consequently, it shows the absence of data races.
Not being able to directly access class fields and the lack of data races for the concurrent execution of method calls implies that the class is thread-safe.\qed

\end{example}


\section{Thread-Safety Analysis with CodeQL} \label{sec:codeql}
CodeQL is a logic programming language, which makes it relatively straightforward to express our properties P1-P3. Using the infrastructure provided by GitHub, CodeQL programs can easily be run on a vast amount of repositories. In this section, we explain how we write CodeQL predicates to express the concepts needed to capture incorrectly synchronized Java classes.

We will develop a series of predicates, culminating in one that contains exactly violations of P3. We will make heavy use of existing CodeQL libraries. These are designed to detect security vulnerabilities and contain many useful building blocks for static analysis. For example, at the level of syntax, we find predicates for recognizing field accesses, and at the level of control flow there is a binary predicate capturing when one control flow node dominates another.

CodeQL is object-oriented, which means that one can define CodeQL classes. A CodeQL class is defined by
writing its characteristic predicate and is thus simply a set of values. However, CodeQL classes offer some convenient syntax, for instance in the form of member predicates, which we illustrate by the following normal predicate, intended to capture field accesses that will need explicit synchronization:
\begin{lstlisting}[
    caption=Exposed field access,
    label=lst:exposed-field-access,
    language=CodeQL]
predicate exposed(FieldAccess a) {
  a.getField() = annotatedAsThreadSafe().getAField() and
  not a.getField().isVolatile() and
  not a.(VarWrite).getASource() = a.getField().getInitializer() and
  not a.getEnclosingCallable() = a.getField().getDeclaringType().getAConstructor()
}
\end{lstlisting}
In line 1, we declare that we are defining a new predicate, it is called "exposed", it is a unary predicate, and its values are from the outset restricted to values of the CodeQL class \code{FieldAccess}.
The body of the predicate, lines 2-5, is just a conjunction and, in keeping with the CodeQL style guide, we have written one conjunct on each line.

The class \code{FieldAccess} offers the member predicate \code{getField}, which is used on line 2. This is a binary predicate that relates values of \code{FieldAccess} to values of another CodeQL class, \code{Field}. The expression \code{a.getField()} will hold for all values of \code{Field} that are related to the particular value of \code{FieldAccess} denoted by \code{a}. 
The predicate \code{annotatedAsThreadSafe} holds for all Java classes, captured by the CodeQL class \code{Class}, that have the Java annotation \code{@ThreadSafe}. The expression \code{annotatedAsThreadSafe().getAField()} holds for all the values of \code{Field} that denote fields of Java classes with this annotation. The equality \code{a.getField() = annotatedAsThreadSafe().getAField()} holds if any value can be found on both the right hand side and the left hand side. Thus, line 2 expresses that \code{a} denotes an access to a field in a class with the \code{@Threadsafe} annotation.
CodeQL is evaluated bottom-up, meaning that all materialized predicates are materialized in full. It is therefor wise to restrict all predicates as much as possible, and we are only interested in field accesses to fields of annotated classes.

The class \code{Field} offers the member predicate \code{isVolatile} used on line 3. Thus, the conjunct on line 3 expresses that the field access denoted by \code{a} is not to a volatile field.

On line 4, we see a CodeQL type cast: \code{a.(VarWrite)} filters the possible values of \code{a} to those that are also values of the CodeQL class \code{VarWrite}. Line 4 expresses that if \code{a} is a write then it is not the one that writes the initial value.

Line 5 expresses that \code{a} is not found in the constructor for the Java class declaring the field being accessed.

Having identified the field accesses that might need synchronization, we wish to restrict our attention to those for the rest of the development. We can make this more convenient by wrapping those values in a CodeQL class:
\begin{lstlisting}[
    caption=CodeQL class ExposedFieldAccess,
    label=lst:class-exposed-field-access,
    language=CodeQL]
class ExposedFieldAccess extends FieldAccess {
  ExposedFieldAccess() { exposed(this) }
}
\end{lstlisting}
Line 1 declares that we are defining a new class, it is called "ExposedFieldAccess", and its values are a subset of the values of the CodeQL class \code{FieldAccess}. We also wish to inherit the member predicates from there.
Line 2 is the characteristic predicate of the class. It uses the special variable \code{this} to restrict the values that are in this class, specifically saying that they must all be exposed according to the predicate we defined earlier\footnote{In the full program, we have inlined \code{exposed} into the characteristic predicate, and it includes a few more cases such as fields where the type is thread-safe.}.

We can now write a CodeQL class for Java classes with the \code{@Threadsafe} annotation and give it some interesting member predicates. The first we show is \code{conflicting}, a binary predicate capturing Definition~\cref{def:conflicting_actions}:
\begin{lstlisting}[
    caption=CodeQL class ClassAnnotatedAsThreadSafe,
    label=lst:class-annotated-as-thread-safe,
    language=CodeQL]
class ClassAnnotatedAsThreadSafe extends Class {
  ClassAnnotatedAsThreadSafe() { this = annotatedAsThreadSafe() }
  /**
   * Actions `a` and `b` are conflicting iff
   * they are field access operations on the 
   * same field and at least one of them is a write.
   */
  predicate conflicting(ExposedFieldAccess a, ExposedFieldAccess b) {
    // We allow a = b, since they could be 
    // executed on different threads
    // We are looking for two operations
    // on the same field
    a.getField() = b.getField() and
    // on this class
    a.getField() = this.getAField() and
    // where at least one is a write
    // wlog we assume that is `a`
    // We use a slightly more inclusive definition 
    // than simply `a.isVarWrite()`
    Modification::isModifying(a)
  }
  ...
}
\end{lstlisting}
In line 19 we refer to the predicate \code{isModifying} in the module \code{Modification}. This predicate captures that a field access can modify a field in several ways depending on the Java type of the field. For instance, if the field is an array, a write to an index of this array would be a modification, but it would actually only contain a read of the array itself.

The conflicting field accesses are the ones where the user must ensure synchronization. We now show the member predicate \code{monitors} and its two helper predicates. These capture when a field access is only publicly accessible through synchronized means:
\begin{lstlisting}[
    caption=member predicate ClassAnnotatedAsThreadSafe::monitors,
    label=lst:member-predicate-monitors,
    language=CodeQL]
class ClassAnnotatedAsThreadSafe extends Class {
  ...
  predicate monitors(ExposedFieldAccess a, Monitors::Monitor monitor) {
    forex(Method m | this.providesAccess(m, _, a) and m.isPublic() |
      this.monitorsVia(m, a, monitor)
    )
  }

  predicate publicAccess(Expr e, ExposedFieldAccess a) {
    exists(Method m | m.isPublic() | this.providesAccess(m, e, a))
  }

  predicate providesAccess(Method m, Expr e, ExposedFieldAccess a) {
    m = this.getAMethod() and
    (
      a.getEnclosingCallable() = m and
      e = a
      or
      exists(MethodCall c | c.getEnclosingCallable() = m |
        this.providesAccess(c.getCallee(), _, a) and
        e = c
      )   )   }
  ...
}
\end{lstlisting}
We read this from below, starting with \code{providesAccess} on line 15. This captures that in method \code{m} on the class, the expression \code{e} results in the execution of the field access \code{a}. In line 16, it restricts \code{m} to be a method on the class. It then provides two alternatives: Either \code{m} contains \code{a} (line 16) and then \code{e} is \code{a} (line 19); this is the base case. Otherwise, \code{m} contains a method call (line 21) that results in the execution of \code{a} and in this case, \code{e} is that method call (line 23); this is expressed by referring to \code{providesAccess} recursively on line 22.
\looseness -1

Using \code{providesAccess} the predicate \code{publicAccess} on line 11 simply states that expression \code{e}, which is publicly accessible, results in the execution of \code{a}.

We can now read the predicate \code{monitors} on line 5. It relates a field access, \code{a}, to a monitor, \code{m}. It says that every single expression that provides public access to \code{a} is monitored by \code{m} (and also that such an expression exists, \code{forex} is a short-hand for \code{forall and exists}). This captures situations like the one below where \code{a} happens in a private
method and looks completely unsynchronized, but where the public accessors provide the synchronization:
\begin{lstlisting}
@ThreadSafe
public class Test {
  private int y;
  private Lock lock = new ReentrantLock();

  private void setYPrivate(int y) {
    this.y = y;
  }

  public void setY(int y) {
    lock.lock();
    setYPrivate(y);
    lock.unlock();
  }
}
\end{lstlisting}
We will not show the full definition of the monitor class or the predicate \code{locallyMonitors}, but we will show one building block, namely the bit that recognizes when an expression is protected by a lock (the other parts recognize protection by the various versions of the \code{synchronized} keyword and are already present in the CodeQL Java library):
\begin{lstlisting}[
    caption=locallyLockedOn,
    label=lst:locally-locked-on,
    language=CodeQL]
/** * Holds if `e` is synchronized on the `Lock` `lock` 
    * by a locking call. */
predicate locallyLockedOn(Expr e, Field lock) {
  isLockType(lock.getType()) and
  exists(Variable localLock, MethodCall lockCall, MethodCall unlockCall |
    represents(lock, localLock) and
    lockCall.getQualifier() = localLock.getAnAccess() and
    lockCall.getMethod().getName() in ["lock", "lockInterruptibly", "tryLock"] and
    unlockCall.getQualifier() = localLock.getAnAccess() and
    unlockCall.getMethod().getName() = "unlock"
  |
    dominates(lockCall.getControlFlowNode(), unlockCall.getControlFlowNode()) and
    dominates(lockCall.getControlFlowNode(), e.getControlFlowNode()) and
    postDominates(unlockCall.getControlFlowNode(), e.getControlFlowNode())  ) }
\end{lstlisting}
This predicate relates an expression \code{e} to a variable \code{lock}. It requires (on line 3) that the variable is of the Java type \code{Lock}. On line 4, it uses an existential quantifier to posit the existence of two method calls \code{lockCall} and \code{unlockCall} both of which must be calls on the variable \code{lock} (lines 5 and 7) and must be calls to appropriately called methods (lines 6 and 8). It then states on line 10 that the control flow node for \code{lockCall} must dominate the control flow node for \code{unlockCall}. This means that it is impossible to execute \code{unlockCall} without having already executed \code{lockCall}. On line 11 it is stated that the control flow node for \code{lockCall} dominates the control flow node for \code{e} meaning that \code{e} is indeed protected by the lock. Finally, on line 13, we require \code{unlockCall} to post-dominate \code{e} in the control flow graph. Post-dominance is the dual notion of dominance, meaning that it is impossible to execute \code{e} without also executing \code{unlockCall} afterwards. 

\section{Evaluation}
\label{sec:evaluation}

We evaluate the performance of our CodeQL queries by answering the following research questions:

\begin{enumerate}[label=\textbf{RQ\arabic*}]
\item Do our CodeQL queries find thread-safety errors in real-world Java code repositories? \label{rq:find_bugs}
\item Is the scalability of our CodeQL queries sufficient to analyze real-world Java code repositories? \label{rq:scalability}
\end{enumerate}

\subsubsection*{Experimental Setup}
We perform our evaluation using GitHub's Multi-Repository Variant Analysis (MRVA).
MRVA allows users to run CodeQL queries on a lists of projects hosted on GitHub.
For our evaluation, we selected the top 1000 Java projects (ordered by number of stars and dependencies).
More concretely, our evaluation comprises the analysis of 3632865 Java classes.
Out of these classes, 1992 classes are marked as \texttt{@ThreadSafe} and they are distributed over 71 source code repositories.
We consider this list a representative sample of real-world Java classes.
To the best of our knowledge, this is the largest evaluation of a concurrency analysis method for Java source code.
Due to space constraints, we refer readers to the accompanying artifact for the list of 1000 Java repositories in our evaluation.

\subsection{RQ1: Do our CodeQL queries find thread-safety errors in real-world Java code repositories?}
\label{sec:rq1}
We look into the \emph{alerts} found for each of the 1000 repositories in our evaluation.
An alert is an instance of a violation of the properties \ref{prop:no_escaping}-\ref{prop:correctly_synchronized} that our queries found.
Each alert points to the precise line in the code that violates the property.
For~\ref{prop:correctly_synchronized}, the alert points to two lines of code, one for each conflicting access.
Concretely, to answer \ref{rq:find_bugs}, we discuss: number of alerts for each repository and false positives. 

\begin{figure*}[t!]
  \centering
  \includegraphics[width=0.27\textwidth]{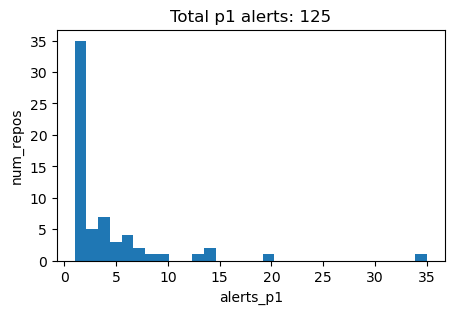}
  \includegraphics[width=0.27\textwidth]{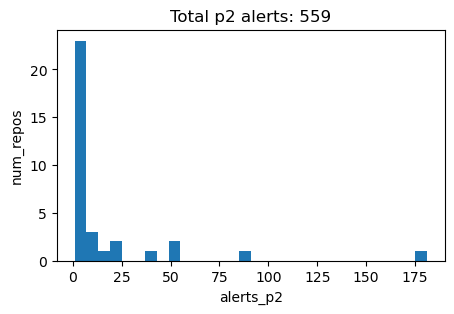}
  \includegraphics[width=0.27\textwidth]{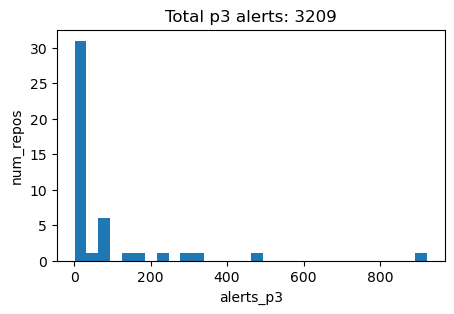}
  \caption{Histograms with distributions of number of alerts for \proponeref-\propthreeref, from left to right.}
  \label{fig:rq1}
\end{figure*}

\subsubsection*{Number of alerts}
Our queries found 3893 alerts in total.  
For~\ref{prop:no_escaping}, we found 125 alerts from 20 repositories.  
For~\ref{prop:safe_publication}, we found 559 alerts from 34 repositories.
For~\ref{prop:correctly_synchronized}, we found 3209 alerts from 45 repositories.
Property~\ref{prop:no_escaping} produces the least amount of alerts.
\Cref{fig:rq1} depicts histograms with the distribution of number of alerts.
These plots reveal that, for most repositories, we found less than \(\approx\)25 alerts.
For~\ref{prop:no_escaping} 98\% of repositories have less than 25 alerts, for~\ref{prop:safe_publication} 85\% and for~\ref{prop:correctly_synchronized} 69\%.
The reason for this unevenness in number of alerts is a few outliers in properties \proptworef and \propthreeref.
For instance, we observe a repository with more than 170 alerts for \proptworef, and several repositories with more than 200 alerts for \propthreeref.
For \propthreeref, the reason is that when a field is not correctly protected, then almost any access to the field is part of a conflicting access. 
If the field is commonly accessed in the class, then it generates a large number of alerts.
Note that although multiple alerts stem from the same concurrency issue, they are not false positives---we discuss false positives below.
Similarly, for \proptworef, when a repository has many fields in classes annotated as \texttt{@ThreadSafe} and safe publication is not systematically ensured for these fields, then we observe a larger amount of alerts.
\looseness -1

\subsubsection*{False positives}
To detect false positives, we manually inspected the alerts reported by our queries.
For~\proponeref we inspected all 125 alerts.
For \proptworef and \propthreeref, due to the large number of alerts, we inspected up to 30 alerts per repository.
This gives a high alert coverage per repository.
This is useful to find different kinds of false positives; as typically false positives are similar within the same repository.
For \proptworef, we cover all alerts for 29 (out of 34) repositories, and for the remaining repositories we have 79\%, 58\%, 58\%, 34\%, and 16\% coverage.
For \propthreeref, we cover all alerts for 32 (out of 45) repositories, and for the remaining repositories we had 45\%, 40\%, 39\%, 36\%, 35\%, 32\%, 24\%, 18\%, 13\%, 11\%, 9\%, 6\%, and 3\% coverage.

In total, we found 110 false positives (out of the 1024 alerts that we manually analyzed).
We found 1, 6 and 103 false positives for properties \proponeref-\propthreeref, respectively.
There were two reasons for all the false positives we found: fields instantiated on thread-safe classes that our queries do not know to be thread-safe, and not using standard lock APIs/classes.
The classes that our queries consider thread-safe are all classes in the package \texttt{java/util/concurrent} and selected classes from the library \texttt{com.google.common}.
Some of the false positives we found were caused by fields being instances of thread-safe classes from other libraries.
Regarding locks, our queries consider standard Java locks and the method calls from that interface \texttt{lock()}, \texttt{tryLock()}, \texttt{lockInterruptibly()}, and \texttt{unlock()} as well as \texttt{synchronized} methods and blocks.
The false positives we found regarding locking were due to the code using read-write locks or stamped locks.
These issues do not limit the applicability of our queries.
On the one hand, we designed the queries so that the list of classes to be considered thread-safe can be extended by developers. 
On the other hand, we were aware of the different locking mechanisms and APIs, and we plan to include them in future versions of the queries.
There is no fundamental limitation in CodeQL that prevents detecting the different types of locks and their APIs to lock/unlock.

To further validate the usefulness of the alerts, we reported a selection of 5 alerts as pull requests (PRs) to their corresponding repositories.
We selected alerts from 3 popular repositories Apache Flink (24.6k stars), gRPC (11.6k starts) and jib (13.9k stars).
Developers replied to all PRs, and one of the PRs has started a code restructuring discussion to fix the thread-safety issue that we reported.

\paragraph*{Conclusion for \ref{rq:find_bugs}}
We conclude that our CodeQL queries effectively find thread-safety errors in real-world Java code repositories.
We found 3893 alerts in the top 1000 Java repositories on GitHub.
Although we discovered false positives in our results, they were caused by the large diversity of thread-safe classes and locking mechanisms in the concurrent Java ecosystem.
That said, our queries can easily be extended to remove those false positives.
Finally, we reported a selection of the alerts as PRs in popular GitHub repositories.
Developers reacted positively to the PRs, and one of them triggered a code restructuring task for the project.

\subsection{RQ2: Is the scalability of our CodeQL queries sufficient to analyze real-world Java code repositories?}
\label{sec:rq2}

\begin{figure*}[t!]
  \centering
  \includegraphics[width=0.9\textwidth]{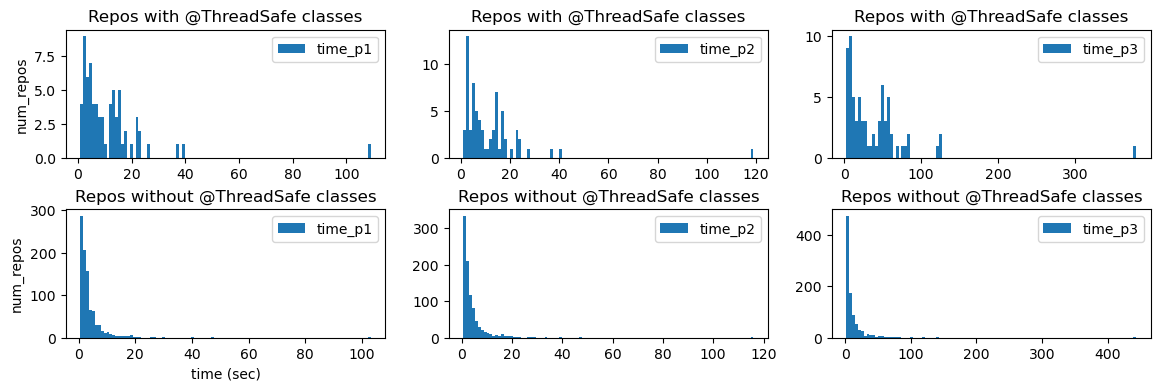}
  \includegraphics[width=1.0\textwidth,trim = 0mm 0mm 0mm 7mm, clip]{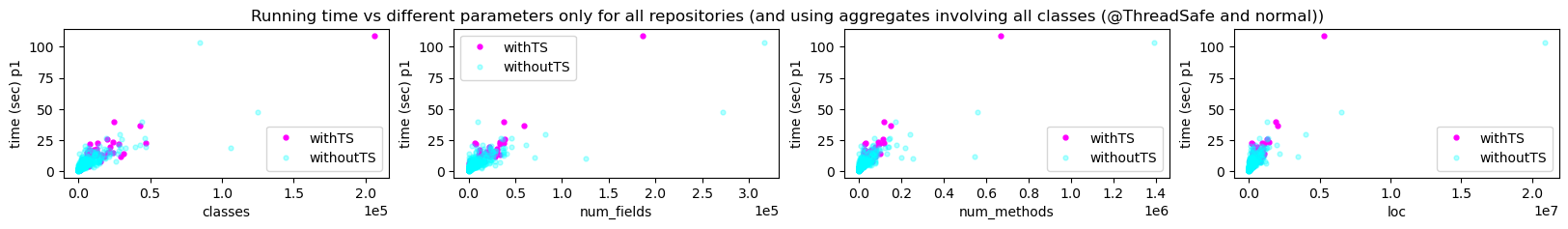}
  \includegraphics[width=1.0\textwidth,trim = 0mm 0mm 0mm 7mm, clip]{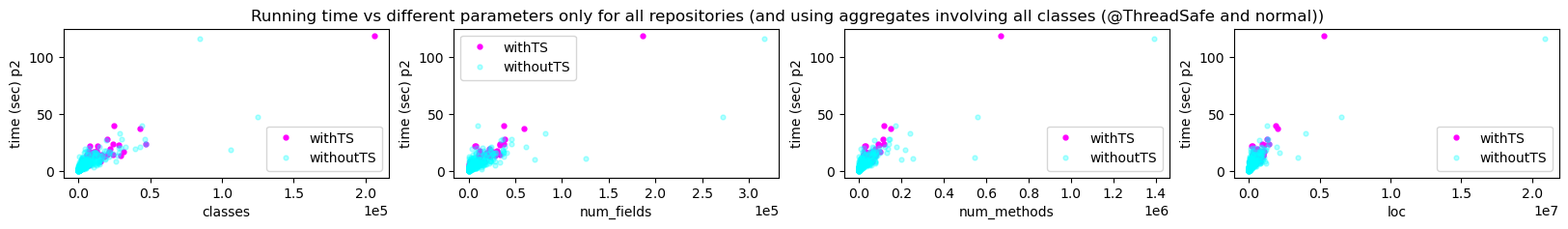}
  \includegraphics[width=1.0\textwidth,trim = 0mm 0mm 0mm 7mm, clip]{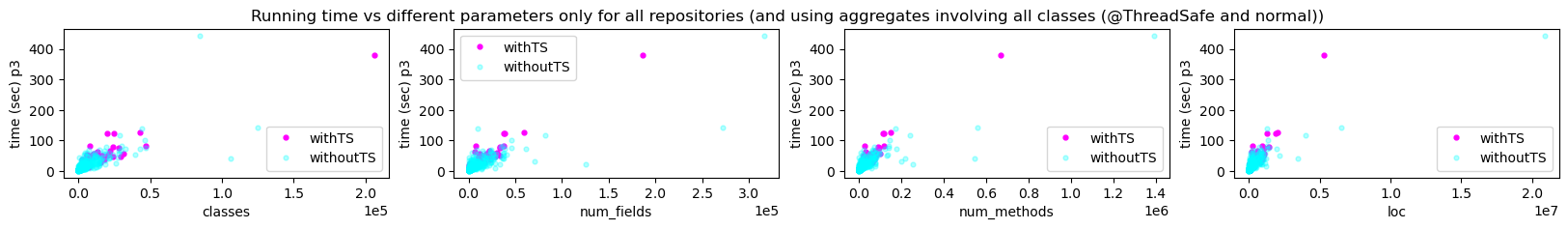}
  \caption{\underline{Rows 1-2}: Histograms with runtime distribution for \proponeref-\propthreeref (from left to right) and split depending on the repositories contain classes annotated as \texttt{@ThreadSafe}. \underline{Rows 3-5}: Repository features (number of classes, number of fields, number of methods and lines of code) versus runtime for \proponeref-\propthreeref (from top to bottom) split between repositories with classes annotated as \texttt{@ThreadSafe} (purple) and without them (light blue).
  }
  \label{fig:rq2}
\end{figure*}

To evaluate \ref{rq:scalability}, we look into the runtime for our queries on each of the 1000 projects in our evaluation.
We obtain the measurements from the MRVA analysis tool in CodeQL.
MRVA analyses are executed in virtual machines with 4 cores and 14GB of RAM running Ubuntu 22.04.
The running time that we report is representative of the running time of the queries for a project hosted on GitHub, as these specs are the same for the machines running the queries when submitting a PR in GitHub.
In what follows, we discuss the overall runtime of each query, and also look into the runtime in relation to repository features, i.e., number of classes, number of fields, number of methods and lines of code.

\subsubsection*{Overall runtime}
\Cref{fig:rq2} (first two rows) shows histograms with the distribution of runtime in seconds.
The first row depicts runtime for repositories with classes annotated as \texttt{@ThreadSafe}, and the second row repositories without classes annotated as \texttt{@ThreadSafe}.
We observe that the runtime is below 2 minutes for all repositories for \proponeref and \proptworef, and for \propthreeref it is for 99.3\% of repositories.
The slowest runtime that we measured was 444 seconds (\(\approx\)7.5 minutes) for \propthreeref.
Furthermore, for \proponeref and \proptworef the runtime is below 1 minute for 99.8\% of the repositories.
These are excellent results, as it implies that a developer could wait as little as \(\approx\)2 minutes after submitting a PR to obtain the results.
Finally, we observe a minor increase in running time in repositories with classes annotated as \texttt{@ThreadSafe}.
This is expected, as in these cases, our queries require further analysis.

\subsubsection*{Relation between runtime and repository features}
The last three rows in~\cref{fig:rq2} plot runtime versus the following repository features: number of classes, number of fields, number of methods and lines of code (loc).
We separate data points from repositories containing classes annotated as \texttt{@ThreadSafe} (in purple), and repositories without them (in light blue).
The goal of this experiment is to determine whether there are correlations between these features and the runtime.
As before, we observe a minor distinction between repositories containing classes annotated as \texttt{@ThreadSafe} classes and those who do not; with the former having slightly larger running times.
The plots show a positive linear correlation between runtime and all features, for all properties.
We observe that number of methods and lines of code have a stronger increasing effect on runtime than number of classes or number of fields.
These plots also shed light on the outliers from the previous discussion.
For instance, the repository with the 7.5 minutes runtime contains 20 million lines of code, 1.4 million methods, 300k fields and 100k classes.
We can also observe that for repositories with up to 200k lines of code, 20k methods, 6000 fields, and 1200 classes, the runtime is below 2 minutes.
These correspond to 99.3\% of the repositories in our evaluation.

\paragraph*{Conclusion for \ref{rq:scalability}}
We conclude that the scalability of our CodeQL queries is sufficient to analyze real-world Java code repositories.
Our evaluation shows that the queries run below 2 minutes in repositories with thousands of classes, fields, methods and lines of code.
Since our queries are executed when submitting a PR, we argue that a runtime of 2 minutes does not significantly disturb the workflow of developers.
The evaluation also indicates that runtime grows linearly with respect to the features mentioned above; with lines of code and number of methods being the features that have the strongest effect.
\looseness -1


\section{Related Work}

\subsection{Static Analysis}
Habib and Pradel present TSFinder, a method to automatically determine whether a Java class is thread-safe~\cite{habib_is_2018}.
TSFinder uses a machine learning classifier to perform the analy-sis.
Java classes are transformed into graphs, which are used as input to the classifier.
The analysis by TSFinder returns a label denoting whether the class under analysis is thread-safe.
Instead, our method relies on precise definitions for thread-safety founded on the Java memory model.
As a consequence, when we report an alert, our method points to the program statement(s) causing the violation (instead of giving a binary answer).
AutoLock is a static analysis method to automatically add locks in Java classes to fix data races~\cite{wang_towards_2020}.
This method uses heuristics based on data-flow and alias analysis to detect unprotected mutable fields.
Our queries capture other important aspects of thread-safety for classes, namely, preventing field escaping and safe publication.
This is due to re-using the correctness notions in the Java memory model for our analysis.
\looseness -1

There exist a line of work on deductive verification that focuses on correctness of concurrent software~\cite{DBLP:conf/esop/LeinoM09,DBLP:conf/fosad/LeinoMS09,DBLP:conf/plpv/AmighiBHZ12}.
These works translate programs into logical formulae and use Satisfiability Modulo Theory (SMT) solvers to perform the analysis.
Typically using permission based logics.
Permission logics have also been used for type-based analysis~\cite{DBLP:conf/oopsla/BierhoffA07}.
Deductive verification methods require class and method specifications in the form of pre- and post-conditions to perform the analysis.
This allows for checking a larger range of concurrency issues compared to our queries.
However, these methods require specialized knowledge.
They require developers to learn the logics used for reasoning and specify pre- and post-conditions for all classes and methods under analysis.
On the contrary, our method only requires the annotation \texttt{@ThreadSafe} on the target class; this is a standard and widely used annotation in Java (as~\cref{sec:evaluation} shows).
Moreover, the effort required to add pre- and post-conditions to thousands of classes and methods---as several of the projects in our evaluation have---could be a deterrent for the use of these methods in practice.

The clang C/C++ compiler has an integrated thread-safety analysis tool~\cite{hutchins_cc_2014}.
Similar to the above, this tool requires multiple annotations to find concurrency errors.
It requires that developers add annotations regarding what lock is being used in different parts of the code.
Note that this is a tedious and error-prone task.
Our method automatically determines the relevant locks for each operation using the happens-before order.
Although our queries do not capture all lock classes and lock/unlock APIs, this is not a prohibiting limitation; CodeQL does not have foundational restrictions that would prevent the adding of these classes and lock/unlock APIs.
\looseness -1

\subsection{Testing and Model-Checking}
Testing methods have been extensively applied to find concurrency errors~\cite{nistor2012ballerina,choudhary2017efficient,DBLP:conf/pldi/PradelG12,DBLP:conf/asplos/BurckhardtKMN10,DBLP:conf/asplos/ZhaoW0R25}.
A line of work focuses on generation of concurrent tests~\cite{choudhary2017efficient,nistor2012ballerina,DBLP:conf/pldi/PradelG12}.
These methods automatically generate tests involving multiple threads that are designed to target thread-safety errors.
Although effective, these methods do not ensure that thread-safety violations will be uncovered, as this depends on the runtime scheduler executing the right interleaving of operations.
To diminish this problem, another line of work focuses on designing testing methods with probabilistic guarantees.
Probabilistic concurrency testing~\cite{DBLP:conf/asplos/BurckhardtKMN10,DBLP:conf/asplos/ZhaoW0R25} aims at ensuring that all interleavings of operations are executed with non-zero probability.
In this way, it is possible to probabilistically guarantee that if a class have thread-safety errors, the interleaving triggering an error will be executed with non-zero probability.
Finally, other methods such as Coyote~\cite{deligiannis_industrial-strength_2023} attempt to systematically explore all possible interleavings of a test to find thread-safety errors.
As opposed to testing methods, the accuracy of our method does not depend on the number of interleavings or their likelihood to be executed by the scheduler.

Model-checking is a verification method designed to systematically explore all possible executions of a program.
Model-checking has been applied to detect thread-safety errors~\cite{DBLP:conf/cav/WuHHLS23, schemmel2020symbolic, aronis2018optimal}.
To avoid the state explosion problem, these methods require state-reduction techniques to decrease the state space size.
Similarly to testing, the main difference of these methods and our queries is that the performance of our analysis does not depend on the number of executions.
This is crucial to be able to efficiently analyze code repositories with thousands of classes, fields, methods and lines of code (which we demonstrated that our method can handle, cf.~\cref{sec:evaluation}).
\looseness -1


\section{Conclusion}
We have presented a scalable method to automatically detect thread-safety violations in real-world Java classes.
We have introduced a definition of thread-safe class for Java that is founded on the notion of correctness of the Java memory model, namely, data race freedom.
We have introduced 3 properties that determine whether a Java classes is thread-safe.
We have proven that these properties are sufficient to achieve thread-safety.
We have implemented 3 queries in CodeQL, one for each of the properties.
We have evaluated our queries on the top 1000 Java repositories on GitHub; targeting classes annotated as \texttt{@ThreadSafe}.
Our evaluation comprises 3632865 Java classes; which include 1992 classes marked as \texttt{@ThreadSafe}.
Our evaluation found thousands of thread-safety errors, demonstrating that our queries find bugs in real-world Java repositories.
The runtime of our queries was below 2 minutes for repositories including up to 200k lines of code, 20k methods, 6000 fields, and 1200 classes.
The slowest runtime was 7.5 minutes for a repository with 20 million lines of code, 1.4 million methods, 300k fields and 100k classes.
These results demonstrate that the queries can be used for very large Java repositories.
We reported a selection of discovered bugs as PRs, and developers reacted positively to them.
Finally, we have submitted the queries to CodeQL, and they are currently in the process of becoming part of the standard CodeQL queries.
All in all, these results show a practical application of the Java memory model to build a highly scalable thread-safe analysis method for Java classes.


\bibliographystyle{IEEEtran}
\bibliography{references}

\end{document}